\theoremstyle{plain}
\newtheorem{theorem}{Theorem}
\newtheorem{prop}[theorem]{Proposition}
\theoremstyle{definition}
\theoremstyle{remark}
\newtheorem*{remark}{Remark}
\DeclareMathOperator{\divop}{div}
\newcommand{\ud}{\,\mathrm{d}}
\newcommand{\rd}{\mathrm{d}}
\newcommand{\RR}{\mathbb{R}}
\newcommand{\PP}{\mathbb{P}}
\newcommand{\dps}{\displaystyle}
\newcommand{\bd}[1]{\boldsymbol{#1}}
\newcommand{\wt}[1]{\widetilde{#1}}
\newcommand{\wh}[1]{\widehat{#1}}
\DeclareFontFamily{U}{mathx}{\hyphenchar\font45}%
   \DeclareFontShape{U}{mathx}{m}{n}{<->mathx10}{}%
   \DeclareSymbolFont{mathx}{U}{mathx}{m}{n}%
   \DeclareMathAccent{\widebar}{0}{mathx}{"73}%
  \newcommand{\widebar}[1]{\overline{#1}}%
\newcommand{\wb}[1]{\widebar{#1}}
\newcommand{\mc}[1]{\mathcal{#1}}
\newcommand{\abs}[1]{\lvert#1\rvert}
\newcommand{\average}[1]{\left\langle#1\right\rangle}
\newcommand{\jump}{\text{jump}}
\title{Methodological and computational aspects of parallel tempering methods
  in the infinite swapping limit}
\author{Jianfeng Lu} 
\address{Department of Mathematics, Department of Physics, and
  Department of Chemistry, Duke University, Box 90320, Durham NC
  27708, USA}
\email{jianfeng@math.duke.edu} 
\author{Eric Vanden-Eijnden} 
\address{Courant Institute of
  Mathematical Sciences, New York University, 251 Mercer Street, New
  York, NY 10012, USA}
 \email{eve2@cims.nyu.edu} 
\date{December 19, 2017}
\begin{document}

\begin{abstract}
  A variant of the parallel tempering method is proposed in terms of a
  stochastic switching process for the coupled dynamics of replica
  configuration and temperature permutation.  This formulation is
  shown to facilitate the analysis of the convergence properties of
  parallel tempering by large deviation theory, which indicates that
  the method should be operated in the infinite swapping limit to
  maximize sampling efficiency. The effective equation for the replica
  alone that arises in this infinite swapping limit simply involves
  replacing the original potential by a mixture potential. The
  analysis of the geometric properties of this potential offers a new
  perspective on the issues of how to choose of temperature ladder,
  and why many temperatures should typically be introduced to boost
  the sampling efficiency. It is also shown how to simulate the
  effective equation in this many temperature regime using multiscale
  integrators. Finally, similar ideas are also used to discuss
  extensions of the infinite swapping limits to the technique of
  simulated tempering.
\end{abstract}

\maketitle

\section{Introduction}
\label{sec:intro}

Techniques such as parallel \cite{Sugita1999, Cecchini2004, Bussi2006}
and simulated tempering \cite{Marinari1992} have become standard tools
to accelerate the sampling of systems with complicated energy
landscapes, mostly because of their effectiveness and minimal
requirement of detailed knowledge of the system's specifics. In a
nutshell, the basic idea of these methods is to temporarily raise the
temperature of the system (or that of a copy thereof) to allow it to
explore its energy landscape faster.  When this is done in a specific
way, the properties of the system at the physical temperature can
still be calculated without bias. As a result tempering techniques
provide us with importance sampling schemes to analyze the statistical
mechanics properties of a system at a given temperature even in
situations where direct simulations at this temperature are too slow
to be practical. The aim of this paper is to give some mathematical
analysis of these methods via their formulation as stochastic
switching processes, which will permit us to conveniently explore
their properties in various parameter regimes. In so doing, we will
also discuss implementation issues.

Consider first parallel tempering, in which one evolves multiple
copies (or replicas) of the system at different temperatures which are
occasionally swapped among these replicas (which is why the method is
also known as replica exchange). Typically, these swaps are attempted
at a given frequency $\nu$ with a Metropolis-Hastings
acceptance/rejection criterion to guarantee that the replicas sample a
known equilibrium distribution over which expectations at any of the
temperatures in the stack (including the physical one) can be
calculated. In this context it was found empirically
\cite{Sindhikara2008,Rosta2009,Sindhikara2010} that the larger the
swapping frequency, the higher the sampling efficiency is. This
observation was mathematically justified in the work of Dupuis et
al.~\cite{Dupuis2012} by showing that the large deviation rate
functional for the empirical measure is monotonically increasing with
$\nu$. This observation also motivated the development of infinite
swapping replica exchange dynamics \cite{Dupuis2012,Plattner2011,
  Lu2013,Doll2015}, in which one attempts to simulate directly the
limiting dynamics at $\nu \to \infty$.

How to do so in practice is non-trivial, however. When parallel
tempering is applied to complicated systems, often times a large
number ($N \sim 50-100$) of temperatures and replicas is needed to
make the sampling efficient. In such cases, the simulation of the
infinite swappling limit becomes challenging because the limiting
dynamics involves coefficients given in terms of sums with as many
terms as there are permutations of the replicas, $N!$. An exhaustive
evaluation of such sums quickly become undoable, even numerically, and
alternative strategies must be used. One possibility, proposed
in~\cite{TangLuAbramsVE}, is to resort to multiscale integrators like
those in the heterogeneous multiscale methods (HMM) to effectively
simulate the limiting dynamics.

One of the purposes of this paper is to present the mathematical
foundation behind the algorithm proposed
in~\cite{TangLuAbramsVE}. Specifically, we show that the parallel
tempering can be formulated as a diffusion of the replica coupled with
a Markov jumping process in the space of permutations. Together this
gives a stochastic switching process.  Using the approach introduced
in \cite{Dupuis2012}, the infinite swapping limit can be justified by
the monotonicity of the large deviation rate functional as a function
of the swapping frequency. The large swapping frequency introduces a
scale separation between the dynamics of the replica and that of the
permutation variable, and thus naturally calls for multiscale
integrators, such as those based on HMM, which justifies the
integrator introduced in \cite{TangLuAbramsVE}.

Similar ideas can also be used in the context of simulated tempering,
in which a single replica is used but the temperature itself is made a
dynamical variable of the system. In this setup too, it is useful to
consider the coupled dynamics of the system and the temperature as a
stochastic switching process, and this allows one to again justify via
large deviation theory that making the temperature evolve infinitely fast
(the `infinite switch limit') is optimal. As we will show below, the
limiting equation that emerges in this infinite-switch limit is easier
to simulate in principle, as it only involves a sum over the number of
possible temperature states rather than its factorial. In practice,
however, the question becomes how to appropriately weight the
different temperatures states, since this is an input in the method
which is unknown \textit{a~priori} but can greatly influence its
efficiency.

The organization of the remainder of this paper is as follows. In
Section~\ref{sec:remd} we present a formulation of parallel tempering
as a stochastic switching process, which facilitates the discussion of
infinite swapping limit and the development of integrators for the
equations of motion that emerge in this limit.  The infinite swapping
limit is justified through large deviation theory in
Section~\ref{sec:LDP} and the effective equations that arise in this
limit are derived in Section~\ref{sec:infinity}. In
Section~\ref{sec:harmo} we discuss the choice of the temperature
ladder in parallel tempering and, in particular, explain why many
temperatures are typically necessary to boost the efficiency of the
method.  In the infinite swapping limit, where the dynamics of the
replica effectively reduce to a diffusion over a mixture potential,
this question reduces to the analysis of the geometrical properties of
this potential. An HMM multiscale integrator that operates in the many
temperatures regime is then discussed in Section~\ref{sec:hmm} and
shown to perform better than the standard stochastic simulation
algorithm in the regime of frequent swapping.  In
Section~\ref{sec:simulated} we show that similar ideas can be extended
to other tempering techniques such as simulate tempering. Finally,
some concluding remarks are given in Section~\ref{sec:conclu}.

\section{Parallel Tempering as a Stochastic Switching Process}
\label{sec:remd}

Consider a  system whose evolution
is governed by the overdamped Langevin equation:
\begin{equation}
  \label{eq:overdamp}
   \ud \bd{x}(t) = - \nabla V(\bd{x}(t)) \ud t + \sqrt{2  \beta^{-1}} \ud \bd{W}(t),
\end{equation}
where $\bd{x}(t) \in \Omega \subset \RR^{3n}$ denotes the
instantaneous position (at time $t$) of the system with $n$ particles,
$-\nabla V$ is the force associated with the potential $V$,
$\beta= 1/k_B T$ is the inverse temperature, each component of
$\bd{W}(t) \in \RR^{3n}$ is an independent Wiener process, and we set the
friction coefficient to one for simplicity. Most of the results we
present below should be straightforwardly generalizable to other
dynamics, such as the inertial Langevin equation, but we will focus
on~\eqref{eq:overdamp} for the sake of simplicity.

Assuming that $V$ is smooth and $\Omega$ is compact, it is easy to
show that the overdamped dynamics \eqref{eq:overdamp} is ergodic with
respect to the Boltzmann equilibrium distribution with density
\begin{equation}
 \label{eq:eqpdf}
 \rho_\beta(\bd{x}) = Z_\beta^{-1} e^{-\beta V(\bd{x})},
\end{equation}
where $Z_\beta = \int_{\Omega}e^{-\beta V(\bd{x})} \ud \bd{x}$.
However, when the dimensionality of the system is large, $n\gg1$, and
the potential $V$ is complicated with critical points,
etc. \eqref{eq:overdamp} typically exhibits metastability, meaning
that convergence to the equilibrium measure is very slow. In this
context, the basic idea of parallel tempering is to use dynamics at
higher temperature to help the sampling.

Specifically, we introduce multiple replicas of the system and make
each replica evolve under a different temperature that alternatively
swaps between several of them (many artificial temperatures, plus the
physical one).  Assuming that we use $N$ temperatures, so that the
parallel tempering correspondingly involves $N$ replicas, this is done
by constructing a stochastic process on the extended phase space
$\Omega^N \times P_N$, where $\Omega^N$ is the configurational space
for the $N$ replica and $P_N$ is the permutation group of the $N$
temperature, such that its joint equilibrium measure is given by
\begin{equation}
  \label{eq:14}
  \pi(d\bd{X},\sigma) = \varrho(\bd{X},\sigma) d\bd{X}, \qquad 
  \varrho(\bd{X},\sigma) = \frac1{N!} 
  \prod_{i=1}^N 
  \rho_{\beta_{\sigma_i} }(\bd{x}_i),
\end{equation}
where $\bd{X}=(\bd{x_1},\bd{x_2},\ldots, \bd{x_N})$ and $\sigma_i$
denotes the index associated with $i$ by the permutation $\sigma$.  To
this end, we assume that given any $\sigma \in P_N$ we have a
diffusion with generator $\mc{L}_{\sigma}$ whose equilibrium
density is the product
\begin{equation}
  \label{eq:13}
  \varrho(\bd{X}\mid\sigma) = \frac{\varrho(\bd{X},\sigma)}{\int
    _{\Omega^N} \varrho(\bd{X}',\sigma)d\bd{X}'}
  = \prod_{i=1}^N \rho_{\beta_{\sigma_i} }(\bd{x}_i).
\end{equation}
We also assume that given any $\bd{X}\in\Omega^N$, we have a Markov
jump process with infinitesimal jump intensity
$\nu h_{\sigma\sigma'}(\bd{X})$, where $\nu > 0$ is an overall
frequency parameter and $h_{\sigma\sigma'}(\bd{X})\ge 0 $ satisfies the
detailed balance condition
\begin{equation}\label{eq:detailbalance}
  \varrho(\sigma \mid \bd{X}) h_{\sigma\sigma'}(\bd{X}) 
  = \varrho(\sigma' \mid \bd{X}) h_{\sigma'\sigma}(\bd{X}).
\end{equation}
where 
\begin{equation}
  \label{eq:15}
  \varrho(\sigma \mid \bd{X}) =
  \frac{\varrho(\bd{X},\sigma)}{\displaystyle\sum_{\sigma'\in P_N} \varrho(\bd{X},\sigma')}
  = \frac{\displaystyle\prod_{i=1}^N e^{-\beta_{\sigma_i} V(\bd{x}_i) }}
  {\displaystyle\sum_{\sigma'\in P_N} \prod_{i=1}^N e^{-\beta_{\sigma'_i} V(\bd{x}_i)}}.
\end{equation}
The generator for the stochastic switching process
$(\bd{X}(t), \sigma(t))\in \Omega^N \times P_N $ is then given by
\begin{equation}
  \label{eq:generator}
  (\mc{L}_{\nu} u)(\bd{X}, \sigma) = \mc{L}_{\sigma} u(\bd{X}, \sigma)
  -  \nu \sum_{\sigma' \neq \sigma} h_{\sigma\sigma'}(\bd{X})  \bigl( u(\bd{X}, \sigma) 
  - u(\bd{X}, \sigma') \bigr), 
\end{equation}
where the subscript emphasizes the dependence of $\mc{L}_{\nu}$ on the
overall swapping frequency $\nu$.
By construction, any stochastic switch processes constructed this way
have $\varrho(\bd{X},\sigma) $ as the invariant measure.
\begin{prop}\label{prop:1}
  For any attempt switching frequency $\nu > 0$, $\pi(d\bd{X},\sigma)$ is the
  invariant distribution of the process associated with $\mc{L}_{\nu}$.
\end{prop}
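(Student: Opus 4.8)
The plan is to verify invariance in dual (weak) form: $\pi$ is invariant for the process generated by $\mc{L}_\nu$ precisely when $\int_{\Omega^N\times P_N}(\mc{L}_\nu u)(\bd{X},\sigma)\,\pi(d\bd{X},\sigma)=0$ for every test function $u$ in the domain of $\mc{L}_\nu$, i.e. $\mc{L}_\nu^{*}\varrho=0$. Since $\mc{L}_\nu$ splits into the diffusive part $\mc{L}_\sigma$ (acting in $\bd{X}$ for each frozen $\sigma$) and the jump part displayed in \eqref{eq:generator}, by linearity it suffices to show that each of the two contributions integrates to zero against $\pi$.

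For the diffusive part, note first that since each $\rho_\beta$ integrates to one over $\Omega$, the conditional density in \eqref{eq:13} satisfies $\int_{\Omega^N}\varrho(\bd{X}\mid\sigma)\,d\bd{X}=1$, so $\varrho(\bd{X},\sigma)=\tfrac1{N!}\varrho(\bd{X}\mid\sigma)$ and the $\sigma$-marginal of $\pi$ is uniform on $P_N$. By assumption $\mc{L}_\sigma$ has $\varrho(\cdot\mid\sigma)$ as its stationary density, that is $\int_{\Omega^N}(\mc{L}_\sigma f)(\bd{X})\,\varrho(\bd{X}\mid\sigma)\,d\bd{X}=0$ for all admissible $f$ (this is where the compactness of $\Omega$ and the boundary conditions built into $\mc{L}_\sigma$ are used). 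Applying this with $f=u(\cdot,\sigma)$ and summing over $\sigma$ with the weights $\tfrac1{N!}$ shows that the diffusive contribution to $\int\mc{L}_\nu u\,d\pi$ vanishes term by term.

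For the jump part I will fix $\bd{X}$ and work with the finite sum over permutations. Writing $\varrho(\bd{X},\sigma)=q(\bd{X})\,\varrho(\sigma\mid\bd{X})$ with $q(\bd{X})=\sum_{\sigma'\in P_N}\varrho(\bd{X},\sigma')$ the $\bd{X}$-marginal density, the detailed balance condition \eqref{eq:detailbalance} multiplied by $q(\bd{X})$ becomes $\varrho(\bd{X},\sigma)\,h_{\sigma\sigma'}(\bd{X})=\varrho(\bd{X},\sigma')\,h_{\sigma'\sigma}(\bd{X})$; in other words the weights $w_{\sigma\sigma'}(\bd{X}):=\varrho(\bd{X},\sigma)\,h_{\sigma\sigma'}(\bd{X})$ are symmetric in $(\sigma,\sigma')$. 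Hence, for each $\bd{X}$,
\[
  \sum_{\sigma}\sum_{\sigma'\neq\sigma} w_{\sigma\sigma'}(\bd{X})\bigl(u(\bd{X},\sigma)-u(\bd{X},\sigma')\bigr)=0,
\]
since relabelling $\sigma\leftrightarrow\sigma'$ sends the left-hand side to its own negative. Integrating in $\bd{X}$ and multiplying by $-\nu$ shows the jump contribution vanishes as well; adding the two pieces gives $\int\mc{L}_\nu u\,d\pi=0$ for every $u$, which is the assertion.

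\textbf{Main obstacle.} There is no deep difficulty: this is the standard "stationary diffusion plus reversible jump" computation. The only point that needs care is the bookkeeping among the three densities $\varrho(\bd{X},\sigma)$, $\varrho(\bd{X}\mid\sigma)$ and $\varrho(\sigma\mid\bd{X})$, so that \eqref{eq:detailbalance} is correctly turned into the symmetry of $w_{\sigma\sigma'}(\bd{X})$; and, for full rigour rather than a formal manipulation, one should restrict to a core of test functions on which both the diffusive stationarity identity and the (trivial) absence of boundary terms in the jump part hold, which is guaranteed by the standing smoothness and compactness assumptions.
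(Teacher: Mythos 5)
Your proof is correct and follows essentially the same route as the paper: the paper verifies $\mc{L}_{\nu}^{\ast}\varrho=0$ pointwise by splitting into the diffusive and jump parts and using exactly the two ingredients you invoke, namely the stationarity of $\varrho(\cdot\mid\sigma)$ under $\mc{L}_{\sigma}$ and the detailed balance condition for $h_{\sigma\sigma'}$. Your only deviation is to phrase the computation in weak (test-function) form and to cancel the jump contribution by symmetrizing over pairs $(\sigma,\sigma')$ instead of termwise, which is just the dual version of the same argument.
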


\begin{proof}
  It suffices to show that 
  \begin{equation}
    \mc{L}_{\nu}^{\ast} \varrho = 0.
  \end{equation}
  This follows from a direct calculation:
  \begin{equation}
    \begin{aligned}
      \Bigl(\mc{L}_{\nu}^{\ast} \varrho \Bigr)(\bd{X}, \sigma) 
      & =  \mc{L}_\sigma^{\ast} \varrho(\bd{X}, \sigma) - \nu
      \sum_{\sigma'} h_{\sigma\sigma'}(\bd{X}) 
      \varrho(\bd{X}, \sigma) + \nu \sum_{\sigma'} 
      h_{\sigma'\sigma}(\bd{X}) \varrho(\bd{X}, \sigma') \\
      & = \Bigl( \mc{L}_{\sigma}^{\ast} \varrho(\bd{X} \mid \sigma)
      \Bigr) \varrho(\sigma) + \Bigl( - \nu \sum_{\sigma'}
      h_{\sigma\sigma'}(\bd{X}) \varrho( \sigma\mid \bd{X}) + \nu
      \sum_{\sigma'} h_{\sigma'\sigma}(\bd{X}) \varrho( \sigma'\mid
      \bd{X})
      \Bigr) \varrho(\bd{X}) \\
      & = 0,
    \end{aligned}
  \end{equation}
  where the last equality uses the assumption that
  $\varrho(\bd{X} \mid \sigma)$ is stationary with respect to the
  dynamics corresponds to $\mc{L}_{\sigma}$ and
  $h_{\sigma\sigma'}(\bd{X})$ satisfies the detailed balance condition
  \eqref{eq:detailbalance}.
\end{proof}

It is easy to see that we can write
$\varrho(\bd{X}, \sigma) = \bigl(N!\prod_{i=1}^N Z_{\beta_i}\bigr)^{-1}
\exp\bigl(-\beta \mc{V}(\bd{X}, \sigma)\bigr)$
by defining the potential
\begin{equation}
  \mc{V}(\bd{X}, \sigma) = \beta^{-1} \sum_{i=1}^N \beta_{\sigma_i} V(\bd{x}_i), 
\end{equation}
where $\beta $ is a reference inverse temperature introduced for
dimensional purpose: for example, we can take $\beta = \beta_{1}$.
Similarly, the conditional density~\eqref{eq:13} can be written in
terms of $\mc{V}(\bd{X}, \sigma)$ as
\begin{equation}
  \varrho(\bd{X} \mid \sigma) = \left(\prod_{i=1}^N
    Z_{\beta_i}^{-1}\right) 
  \exp\bigl(-\beta \mc{V}(\bd{X}, \sigma) \bigr), 
\end{equation}
which can be sampled by, for example, by the diffusion
\begin{equation}
  \begin{aligned}
  \ud \bd{x}_i & = - \nabla_{\bd{x}_i} \mc{V}(\bd{X}, \sigma) + 
  \sqrt{2 \beta^{-1}} \ud \bd{W}^{(i)}_t \\
  & = - \beta^{-1} \beta_{\sigma_i} \nabla V(\bd{x}_i) 
  + \sqrt{2 \beta^{-1}} \ud \bd{W}^{(i)}_t, \quad i = 1, \ldots, N.
  \end{aligned}
\end{equation}
We can also re-express~\eqref{eq:15} as
\begin{equation}
  \varrho(\sigma \mid \bd{X}) = 
  \dfrac{e^{-\beta \mc{V}(\bd{X}, \sigma)}}{\sum_{\sigma'\in P_N} e^{-\beta \mc{V}(\bd{X}, \sigma')}}
\end{equation}
Thus, to ensure the detailed balance, we may choose $h$ as 
\begin{equation}\label{eq:jumpProb}
  h_{\sigma, \sigma'}(\bd{X}) = a_{\sigma, \sigma'} e^{-\frac{1}{2} \beta (\mc{V}(\bd{X}, \sigma') - \mc{V}(\bd{X}, \sigma))},
\end{equation}
where the symmetric adjacency matrix
$a_{\sigma, \sigma'} = a_{\sigma', \sigma} \in\{0, 1\} $ indicates
whether the permutation $\sigma'$ is allowed to be accessed from
$\sigma$ by a single jump. For example, we may restrict ourselves to
transposition moves in which two random indices are swapped from the
permutation $\sigma$, so that $N \choose 2$ permutations $\sigma'$ are
accessible from $\sigma$. Note that the choice \eqref{eq:jumpProb} is
not unique, and $h_{\sigma,\sigma'}$ can also take other forms that
satisfy the detailed balance condition with respect to
$\varrho(\sigma \mid \bd{X})$.

\subsection*{Calculation of Expectations}
To estimate the expectation of the physical observable $A(x)$ at any
temperature $\beta_k$, $k=1,\ldots, N$, we can use:
\begin{equation}
  \label{eq:expectation}
  \begin{aligned}
  \average{A}_{\beta_k} & = \int_{\Omega}
  A(\bd{x})\rho_{\beta_k}(\bd{x}) \ud\bd{x}\\
  & =  \sum_{j=1}^N \sum_{\sigma \in P_N}\int_{\Omega^N}
  A(\bd{x}_j) 
  \mathbb{1}_{\sigma_j= k} \, \varrho(\sigma  \mid
  \bd{X}) 
  \,  \varrho(\bd{X}) \ud\bd{X}  \\
  & = \sum_{j=1}^N  \int_{\Omega^N}  A(\bd{x}_j) 
  \eta_{j, k}(\bd{X}) \varrho(\bd{X}) \ud \bd{X}.
  \end{aligned}
\end{equation}
Here $\varrho(\bd{X})  $ is the marginal of $\varrho(\bd{X},\sigma)$  on
$\bd{X}$,
\begin{equation}
  \label{eq:16}
  \varrho(\bd{X})   = \sum_{\sigma\in
      P_N} \varrho(\bd{X},\sigma) = \frac1{N!}\sum_{\sigma \in P_N}\prod_{i=1}^N 
  \rho_{\beta_{\sigma_i} }(\bd{x}_i),
\end{equation}
and we defined
\begin{equation}
  \label{eq:etadef}
  \eta_{j, k}(\bd{X}) = \sum_{\sigma \in P_N} \mathbb{1}_{\sigma_j = k}\, 
  \varrho(\sigma  \mid \bd{X}) 
  = \dfrac{\sum_{\sigma \in P_N}
    \mathbb{1}_{\sigma_j = k}  e^{-\beta \mc{V}(\bd{X}, \sigma)}}{\sum_{\sigma\in P_N} e^{-\beta \mc{V}(\bd{X}, \sigma)}},
\end{equation}
which is the probability that the $j$-th replica is at the $k$-th
temperature conditional on the configuration being fixed at
$\bd{X}$. \eqref{eq:expectation} means that we can estimate
$\average{A}_{\beta_k} $ from a sample path of the stochastic
switching process using ergodicity as
\begin{equation}
  \label{eq:17}
  \average{A}_{\beta_k} = \lim_{T\to\infty} \frac1T \int_0^T 
  \sum_{j=1}^N  A(\bd{x}_j(t)) 
  \eta_{j, k}(\bd{X}(t)) \ud t.
\end{equation}

The following proposition shows that the variance of the parallel
tempering estimator for the expectation of an observable at a given
temperature is bounded by the variance of the observable at that
temperature. 
  \begin{prop}\label{prop:variance}
    For each $k = 1, \ldots, N$ we have
    \begin{equation}\label{eq:variance}
      \int_{\Omega^N} \Biggl( \sum_{j=1}^N A(\bd{x}_j) 
      \eta_{j, k}(\bd{X}) \Biggr)^2 
      \varrho(\bd{X}) \ud \bd{X} \leq \int_{\Omega} 
      |A(\bd{x})|^2 \rho_{\beta_k}(\bd{x}) \ud \bd{x}.
    \end{equation}
  \end{prop}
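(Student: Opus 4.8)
The plan is to exploit the fact that, for each fixed $k$ and each fixed configuration $\bd{X}$, the weights $\bigl(\eta_{j,k}(\bd{X})\bigr)_{j=1}^N$ form a probability vector on $\{1,\ldots,N\}$, and then combine Jensen's inequality with the identity~\eqref{eq:expectation} applied to the observable $|A|^2$.

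First I would verify the probabilistic structure of the $\eta_{j,k}$. By definition~\eqref{eq:etadef}, $\eta_{j,k}(\bd{X})\ge 0$, and
\begin{equation*}
  \sum_{j=1}^N \eta_{j,k}(\bd{X}) = \sum_{\sigma\in P_N} \varrho(\sigma\mid\bd{X}) \sum_{j=1}^N \mathbb{1}_{\sigma_j=k} = \sum_{\sigma\in P_N} \varrho(\sigma\mid\bd{X}) = 1,
\end{equation*}
since for each permutation $\sigma$ there is exactly one index $j$ with $\sigma_j=k$, and $\varrho(\cdot\mid\bd{X})$ is a probability distribution on $P_N$.

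Next, since $t\mapsto t^2$ is convex, Jensen's inequality applied to the probability vector $\bigl(\eta_{j,k}(\bd{X})\bigr)_j$ gives, pointwise in $\bd{X}$,
\begin{equation*}
  \Biggl( \sum_{j=1}^N A(\bd{x}_j)\, \eta_{j,k}(\bd{X}) \Biggr)^2 \le \sum_{j=1}^N |A(\bd{x}_j)|^2\, \eta_{j,k}(\bd{X}).
\end{equation*}
Integrating this against $\varrho(\bd{X})\ud\bd{X}$ over $\Omega^N$, the right-hand side becomes $\sum_{j=1}^N \int_{\Omega^N} |A(\bd{x}_j)|^2 \eta_{j,k}(\bd{X}) \varrho(\bd{X})\ud\bd{X}$, which is precisely the last line of~\eqref{eq:expectation} with the observable $A$ replaced by $|A|^2$; hence it equals $\average{|A|^2}_{\beta_k}=\int_\Omega |A(\bd{x})|^2 \rho_{\beta_k}(\bd{x})\ud\bd{x}$. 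This chain of inequalities is exactly~\eqref{eq:variance}.

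There is no serious obstacle here: the only points requiring care are checking that the $\eta_{j,k}$ genuinely sum to one in $j$ (which uses that a permutation assigns temperature $k$ to exactly one replica) and making sure the derivation of~\eqref{eq:expectation} goes through verbatim with $|A|^2$ in place of $A$, which it does since that derivation never used anything about $A$ beyond integrability.
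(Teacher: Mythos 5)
Your proof is correct and follows the same route as the paper's: verify $\sum_{j=1}^N \eta_{j,k}(\bd{X})=1$, apply Jensen's inequality pointwise to the convex function $t\mapsto t^2$, and identify the integrated right-hand side with $\int_\Omega |A|^2\rho_{\beta_k}\,\ud\bd{x}$ via~\eqref{eq:expectation} applied to $|A|^2$. No gaps; the argument matches the paper's proof essentially verbatim.
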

  \begin{proof}
    By definition 
    \begin{equation}
      \sum_{j=1}^N \eta_{j, k}(\bd{X}) = \sum_{j=1}^N \sum_{\sigma \in
        P_N} \mathbb{1}_{\sigma_j = k}\, \varrho(\sigma \mid \bd{X}) 
      = \sum_{\sigma \in P_N} \varrho (\sigma \mid X)  = 1.
    \end{equation}
    Therefore, by Jensen's inequality, we have 
    \begin{equation}
      \Biggl(\sum_{j=1}^N A(\bd{x}_j) \eta_{j, k}(\bd{X})  \Biggr)^2
    \leq \sum_{j=1}^N |A(\bd{x}_j)|^2 \eta_{j, k}(\bd{X}),
  \end{equation}
  and hence
  \begin{equation}
    \begin{aligned}
      \int_{\Omega^N} \Biggl( \sum_{j=1}^N A(\bd{x}_j) \eta_{j,
        k}(\bd{X}) \Biggr)^2 \varrho(\bd{X}) \ud \bd{X} & \leq \int
      \sum_{j=1}^N |A(\bd{x}_j)|^2 \eta_{j, k}(\bd{X}) \varrho(\bd{X})
      \ud \bd{X} = \int_{\Omega} |A(\bd{x})|^2 \rho_{\beta_k}(\bd{x}) \ud \bd{x},
    \end{aligned}
  \end{equation}
  where the last equality follows from \eqref{eq:expectation}. This
  establishes~\eqref{eq:variance}.
\end{proof}

Note that since the bound~\eqref{eq:variance} follows from Jensen's
equality, we expect it to be sharp for generic observables. Therefore,
the efficiency of the sampling scheme is determined by the convergence
to equilibrium of the parallel tempering scheme, as we will discuss in
the following sections.

\section{Large Deviation Principle for the Empirical Measure of the
  Stochastic Switching Process} 
\label{sec:LDP}

In this section we derive a large deviation principle for the
empirical measure of the stochastic switching process $(\bd{X}(t),
\sigma(t))$ marginalized on $\bd{X}(t)$, that is:
\begin{equation}
  \label{eq:18}
  \pi_T(\rd\bd{X}) = \frac1T \int_0^T \delta _{\bd{X}(t)} (\rd\bd{X}) \ud t.
\end{equation}
By construction of the stochastic switching process its infinitesimal
generator splits into the two parts (see~\eqref{eq:generator}):
\begin{equation}
  \mc{L}_{\nu} = \mc{L}_\sigma + \nu \mc{L}_{\jump}, 
\end{equation}
As a consequence, the large deviation rate functional for the
empirical measure also has an additive structure:

\begin{prop} The empirical measure~\eqref{eq:18} satisfies a large
  deviation principle with rate functional given by
\begin{equation}
  I^{\nu}(\mu) = J_\sigma(\mu) + \nu J_{\jump}(\mu), 
\end{equation}
where $\mu$ is a probability measure on $\Omega^N \times P_N$. In
particular, both $J_{\sigma}$ and $J_{\jump}$ are independent of
$\nu$. While the specific form of $J_\sigma$ depends on the choice of
$\mc{L}_{\sigma}$, $J_{\jump}$ is fully determined by the jump
intensity $h_{\sigma\sigma'}(\bd{X})$ as:
\begin{equation}\label{eq:Jjump}
  J_{\jump}(\mu) = \frac{1}{2} \sum_{\sigma, \sigma' \in P_N}
  \int_{\Omega^N} h_{\sigma\sigma'}(\bd{X}) \left[1 - 
    \sqrt{\frac{\theta(\bd{X}, \sigma')}{\theta(\bd{X}, \sigma)}} \right]^2 \mu(\rd\bd{X}, \sigma), 
\end{equation}
where we have assumed that $\mu$ is absolutely continuous with respect
to $\pi$ and denote 
$\theta(\bd{X}, \sigma) := [\rd \mu / \rd \pi](\bd{X}, \sigma)$. 
\end{prop}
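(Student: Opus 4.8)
The plan is to derive the result from the Donsker--Varadhan large deviation theory for occupation measures of Markov processes, using crucially that the switching process is \emph{reversible} with respect to $\pi$. (Throughout, the object carrying the rate function $I^\nu$ is the joint occupation measure $\frac1T\int_0^T\delta_{(\bd{X}(t),\sigma(t))}\ud t$ on $\Omega^N\times P_N$; the marginal \eqref{eq:18} then inherits an LDP by the contraction principle.)

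\textbf{Step 1: reversibility and additivity of the Dirichlet form.} I would first check that $\mc{L}_\nu$ is self-adjoint and nonpositive on $L^2(\pi)$. For each fixed $\sigma$ the generator $\mc{L}_\sigma$ is reversible with respect to $\varrho(\bd{X}\mid\sigma)$ by hypothesis, and because the $\sigma$-marginal $\varrho(\sigma)=1/N!$ of $\pi$ is uniform, this makes $\mc{L}_\sigma$ symmetric on $L^2(\pi)$. The detailed-balance relation \eqref{eq:detailbalance}, rewritten as the symmetry in $(\sigma,\sigma')$ of $h_{\sigma\sigma'}(\bd{X})\varrho(\bd{X},\sigma)$, likewise makes $\mc{L}_{\jump}$ symmetric on $L^2(\pi)$. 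Hence $\mc{L}_\nu=\mc{L}_\sigma+\nu\mc{L}_{\jump}$ is symmetric with Dirichlet form $\mc{E}_\nu(u,u)=-\langle u,\mc{L}_\nu u\rangle_{\pi}=\mc{E}_\sigma(u,u)+\nu\,\mc{E}_{\jump}(u,u)$.

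\textbf{Step 2: the Donsker--Varadhan functional and its reduction (the main obstacle).} Under the standing assumptions ($\Omega$ compact, $V$ smooth, $P_N$ finite, and the joint process ergodic---which follows from Proposition~\ref{prop:1} together with irreducibility of the jump chain on $P_N$), the occupation measure satisfies an LDP with good rate function $I^\nu(\mu)=-\inf_{u>0}\int(\mc{L}_\nu u/u)\,\rd\mu$, the infimum over strictly positive functions in the domain of $\mc{L}_\nu$. This is where the work lies: a priori $\nu\mapsto I^\nu(\mu)$ is only convex (an $\inf$ of maps affine in $\nu$), so the affine splitting claimed in the proposition is not automatic. Reversibility is what makes it affine, because the variational problem can be solved in closed form with a \emph{$\nu$-independent} maximizer $u=\sqrt{\theta}$, $\theta=\rd\mu/\rd\pi$. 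For the diffusion part one uses the pointwise identity $\mc{L}_\sigma u/u=\mc{L}_\sigma(\log u)+\Gamma_\sigma(\log u,\log u)$, with $\Gamma_\sigma$ the carré du champ, and completes the square; for the jump part one symmetrizes against $h_{\sigma\sigma'}(\bd{X})\varrho(\bd{X},\sigma)$ and uses the elementary inequality $rp+q/r\ge 2\sqrt{pq}$ for $r>0$, with equality at $r=\sqrt{q/p}$, applied separately to each admissible pair $(\sigma,\sigma')$. In both cases the optimum is attained at $u=\sqrt{\theta}$, yielding $I^\nu(\mu)=\mc{E}_\nu(\sqrt\theta,\sqrt\theta)$ when $\mu\ll\pi$ with $\sqrt\theta$ in the form domain, and $I^\nu(\mu)=+\infty$ otherwise; alternatively one can quote the known reversible-case identity (the rate function is the Fisher information relative to $\pi$) or reproduce the weak-convergence argument of \cite{Dupuis2012}. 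The only genuinely technical point is the bookkeeping of admissible function spaces (approximating $\sqrt\theta$ by strictly positive functions in the domain, using lower semicontinuity/closability of $\mc{E}_\nu$).

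\textbf{Step 3: the explicit form of $J_{\jump}$.} The splitting $I^\nu(\mu)=\mc{E}_\sigma(\sqrt\theta,\sqrt\theta)+\nu\,\mc{E}_{\jump}(\sqrt\theta,\sqrt\theta)=:J_\sigma(\mu)+\nu J_{\jump}(\mu)$ is now immediate, both terms are manifestly $\nu$-independent, and $J_\sigma$ inherits its form from the chosen $\mc{L}_\sigma$ (for the overdamped choice, a weighted Fisher information in $\bd{X}$). It remains to compute $\mc{E}_{\jump}(\sqrt\theta,\sqrt\theta)$. Writing $\mc{L}_{\jump}u(\bd{X},\sigma)=\sum_{\sigma'}h_{\sigma\sigma'}(\bd{X})\bigl(u(\bd{X},\sigma')-u(\bd{X},\sigma)\bigr)$ and symmetrizing in $(\sigma,\sigma')$ against the symmetric weight $h_{\sigma\sigma'}(\bd{X})\varrho(\bd{X},\sigma)$ gives
\begin{equation*}
  \mc{E}_{\jump}(\psi,\psi)=\frac12\sum_{\sigma,\sigma'\in P_N}\int_{\Omega^N}h_{\sigma\sigma'}(\bd{X})\bigl(\psi(\bd{X},\sigma)-\psi(\bd{X},\sigma')\bigr)^2\varrho(\bd{X},\sigma)\,\rd\bd{X}.
\end{equation*}
Setting $\psi=\sqrt\theta$, pulling the factor $\theta(\bd{X},\sigma)$ out of $\bigl(\sqrt{\theta(\bd{X},\sigma)}-\sqrt{\theta(\bd{X},\sigma')}\bigr)^2$, and using $\theta(\bd{X},\sigma)\varrho(\bd{X},\sigma)\,\rd\bd{X}=\mu(\rd\bd{X},\sigma)$ produces exactly \eqref{eq:Jjump}.
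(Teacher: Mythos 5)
Your proposal is correct and follows essentially the same route as the paper: both rest on the Donsker--Varadhan representation of the rate functional for a reversible process as the Dirichlet form evaluated at $\sqrt{\theta}$, so that the additivity in $\nu$ is inherited directly from $\mc{L}_\nu=\mc{L}_\sigma+\nu\mc{L}_{\jump}$, and the explicit form \eqref{eq:Jjump} follows from the same detailed-balance symmetrization of the jump Dirichlet form. The only difference is presentational: the paper quotes this representation from \cite{DonskerVaradhan:75, DupuisLiu:2015}, while you sketch its derivation from the general variational formula, making explicit that reversibility (not mere invariance) is what yields the $\nu$-independent optimizer $u=\sqrt{\theta}$.
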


The additive structure was first observed in the work of Dupuis et
al. \cite{Dupuis2012, Plattner2011} in the context of parallel
tempering. 
\begin{proof}
  The additive structure comes from the following representation of
  the rate functional \cite{DonskerVaradhan:75, DupuisLiu:2015}:
\begin{equation}
 \begin{aligned}
  I^{\nu}(\mu) & = - \sum_{\sigma \in P_N} \int_{\Omega^N}
  \theta(\bd{X}, \sigma)^{1/2} 
  \mc{L}_{\nu} \bigl(\theta(\bd{X}, \sigma)^{1/2}\bigr) \pi(\rd\bd{X}, \sigma)  \\
  & = - \sum_{\sigma \in P_N} \int_{\Omega^N}  \theta(\bd{X}, \sigma)^{1/2}
  \mc{L}_\sigma \bigl(\theta(\bd{X}, \sigma)^{1/2}\bigr) \pi(\rd\bd{X},
  \sigma)\\
  & \quad - \nu \sum_{\sigma \in P_N} \int_{\Omega^N}  \theta(\bd{X},
  \sigma)^{1/2} 
  \mc{L}_{\jump} \bigl(\theta(\bd{X}, \sigma)^{1/2}\bigr) \pi(\rd\bd{X}, \sigma). 
  \end{aligned}
\end{equation}
Using the definition of $\mc{L}_{\jump}$, we calculate 
\begin{multline}
  - \sum_{\sigma \in P_N} \int_{\Omega^N} \theta(\bd{X}, \sigma)^{1/2}
  \mc{L}_{\jump} \bigl(\theta(\bd{X}, \sigma)^{1/2}\bigr)
  \pi(\rd\bd{X}, \sigma) \\
  = \sum_{\sigma, \sigma'\in P_N} \int_{\Omega^N} \theta(\bd{X},
  \sigma)^{1/2} h_{\sigma\sigma'}(\bd{X}) \bigl(\theta(\bd{X},
  \sigma)^{1/2} - \theta(\bd{X}, \sigma')^{1/2} \bigr) \varrho(\sigma
  \mid\bd{X}) \varrho(\bd{X}) \rd\bd{X}.
  \end{multline}
Since $h_{\sigma\sigma'}(\bd{X})$ satisfies the detailed balance condition, we have 
\begin{equation}
  \begin{aligned}
    &\sum_{\sigma, \sigma'\in P_N} \int_{\Omega^N} \theta(\bd{X}, \sigma)^{1/2}
    h_{\sigma\sigma'}(\bd{X}) \bigl(\theta(\bd{X}, \sigma)^{1/2} -
    \theta(\bd{X}, \sigma')^{1/2} \bigr) \varrho(\sigma
    \mid \bd{X}) \varrho(\bd{X}) \rd\bd{X} = \\
   & = \sum_{\sigma, \sigma'\in P_N} \int_{\Omega^N} \theta(\bd{X}, \sigma)^{1/2}
    h_{\sigma'\sigma}(\bd{X}) \bigl(\theta(\bd{X}, \sigma)^{1/2} -
    \theta(\bd{X}, \sigma')^{1/2} \bigr) \varrho(\sigma'
    \mid \bd{X}) \varrho(\bd{X}) \rd\bd{X}\\
    & = \sum_{\sigma, \sigma'\in P_N} \int_{\Omega^N} \theta(\bd{X}, \sigma')^{1/2}
    h_{\sigma\sigma'}(\bd{X}) \bigl(\theta(\bd{X}, \sigma')^{1/2} -
    \theta(\bd{X}, \sigma)^{1/2} \bigr) \varrho(\sigma \mid \bd{X})
    \varrho(\bd{X}) \rd\bd{X},
  \end{aligned}
\end{equation}
where we swapped the indices $\sigma$ and $\sigma'$ to get the second
equality. Combined with the previous identity, we get
\begin{equation}
  \begin{aligned}
    &- \sum_{\sigma \in P_N} \int_{\Omega^N} \theta(\bd{X}, \sigma)^{1/2}
    \mc{L}_{\jump}
    \bigl(\theta(\bd{X}, \sigma)^{1/2}\bigr) \pi(\rd\bd{X}, \sigma) \\
    & = \frac{1}{2} \sum_{\sigma, \sigma'\in P_N} \int_{\Omega^N}
    h_{\sigma\sigma'}(\bd{X})
    \Bigl[\theta(\bd{X}, \sigma)^{1/2} - \theta(\bd{X}, \sigma')^{1/2}
    \Bigr]^2 
    \pi(\rd\bd{X}, \sigma) \\
    & = \frac{1}{2} \sum_{\sigma, \sigma'\in P_N} \int_{\Omega^N}
    h_{\sigma\sigma'}(\bd{X}) \Bigl[1 - \theta(\bd{X}, \sigma')^{1/2}
    \theta(\bd{X}, \sigma)^{-1/2} \Bigr]^2 \mu(\rd\bd{X}, \sigma).
  \end{aligned}
\end{equation}
This is exactly $J_{\jump}$ defined in \eqref{eq:Jjump}. 
\end{proof}

\section{Infinite Swapping Limit}
\label{sec:infinity}

As $J_{\jump}$ is non-negative, it is natural to consider taking the
limit $\nu \to \infty$ to maximize the rate functional, which we refer
to as the infinite swapping limit.  To derive the effective
dynamics that emerges in this limit, consider the forward Kolmogorov
equation for the stochastic switching process $(\bd{X}(t),\sigma(t))$:
\begin{equation}
  \begin{aligned}
    \partial_t \varrho(t,\bd{X}, \sigma) & = \mc{L}_{\nu}^{\ast} \varrho (t,\bd{X}, \sigma) \\
    & = \mc{L}_{\sigma}^{\ast} \rho(t, \bd{X}, \sigma) - \nu \sum_{\sigma'}
    h_{\sigma\sigma'}(\bd{X}) 
    \varrho(t,\bd{X}, \sigma) +
    \nu \sum_{\sigma'} h_{\sigma'\sigma}(\bd{X}) \rho(t, \bd{X}, \sigma'). 
  \end{aligned}
\end{equation}
To take the limit $\nu \to \infty$, let us introduce the ansatz
\begin{equation}
  \varrho(t, \bd{X}, \sigma) = \varrho_0(t, \bd{X}, \sigma) + \nu^{-1} 
  \varrho_1(t, \bd{X}, \sigma) + \nu^{-2} \varrho_2(t, \bd{X}, \sigma) + \cdots. 
\end{equation}
Matching orders of $\nu$, we get 
\begin{align}
  & \mc{L}_{\jump}^{\ast} \varrho_0 = 0; \label{eq:rho0} \\
  & \mc{L}_{\jump}^{\ast} \varrho_1 = \partial_t \varrho_0 - \mc{L}_\sigma^{\ast}
  \varrho_0, \label{eq:rho1}
\end{align}
and similarly for the higher order expansions. The leading order \eqref{eq:rho0} reads 
\begin{equation}
  \bigl(\mc{L}_{\jump}^{\ast} \varrho_0\bigr)(t, \bd{X}, \sigma) 
  = - \sum_{\sigma'} h_{\sigma\sigma'}(\bd{X}) \varrho_0(t, \bd{X},  \sigma) 
  + \sum_{\sigma'} h_{\sigma'\sigma}(\bd{X}) \varrho_0(t, \bd{X}, \sigma') = 0, 
\end{equation}
which implies that, since by construction $\varrho(\sigma \mid \bd{X})$
is the invariant measure of the jumping process associated with
$\bigl\{h_{\sigma\sigma'}(\bd{X})\bigr\}$ 
\begin{equation}
  \varrho_0(t, \bd{X}, \sigma) = f_0(t, \bd{X}) \varrho(\sigma \mid \bd{X}),
\end{equation}
for some $f_0$ to be determined. Substitution this into the next order
gives
\begin{equation}
  \varrho(\sigma \mid  \bd{X}) \partial_t f_0(t, \bd{X}) =
  \mc{L}_\sigma^{\ast} \varrho_0 
  - \sum_{\sigma'} h_{\sigma\sigma'}(\bd{X}) \varrho_1(t, \bd{X},
  \sigma) 
  + \sum_{\sigma'} h_{\sigma'\sigma}(\bd{X}) \varrho_1(t, \bd{X}, \sigma'). 
 \end{equation}
Summing over $\sigma$, we obtain the following solvability condition
for this equation (recall that $\sum_{\sigma} \varrho(\sigma \mid \bd{X}) = 1$)
\begin{equation}
  \partial_t f_0(t, \bd{X}) = \sum_{\sigma} \mc{L}_{\sigma}^{\ast} 
  \bigl( \varrho(\sigma \mid \bd{X}) f_0(t, \bd{X}) \bigr) =: \wb{\mc{L}}^{\ast} f_0(t, \bd{X}),
\end{equation}
where the last identity defines $\wb{\mc{L}}$, the infinitesimal
process of the ``averaged process'' in the infinite swapping limit.

The above asymptotic derivation can be made rigorous as in usual
averaging, and we conclude:
\begin{prop}
  As $\nu \to \infty$, $\bd{X}(t)$ converges to the averaged process
  defined by the infinitesimal generator $\wb{\mc{L}}$.
\end{prop}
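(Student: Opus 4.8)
This is an averaging (homogenization) statement for the fast--slow system $(\bd{X}(t),\sigma(t))$, in which $\sigma(t)$ is the fast component: its jump rates are of order $\nu$, so on the $O(1)$ time scale it equilibrates essentially instantaneously to the conditional law $\varrho(\sigma\mid\bd{X})$, while $\bd{X}(t)$ evolves slowly under $\mc{L}_\sigma$. The plan is to make the formal two-scale expansion preceding the statement rigorous via the perturbed test function (corrector) method, combined with a martingale-problem characterization of the limit; equivalently one may invoke a standard averaging theorem of Kurtz/Ethier--Kurtz type once the corrector has been constructed. I would proceed in four steps.

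\textbf{Step 1 (corrector equation).} Fix a smooth test function $u=u(\bd{X})$. For each fixed $\bd{X}\in\Omega^N$, regard $\mc{L}_{\jump}$ as the generator of an irreducible continuous-time Markov chain on the finite set $P_N$ with stationary law $\varrho(\cdot\mid\bd{X})$; irreducibility holds since the adjacency matrix $a_{\sigma\sigma'}$ connects $P_N$ (e.g.\ transpositions generate $P_N$) and the rates $h_{\sigma\sigma'}(\bd{X})=a_{\sigma\sigma'}e^{-\frac12\beta(\mc{V}(\bd{X},\sigma')-\mc{V}(\bd{X},\sigma))}$ are, on the compact $\Omega^N$ with $V$ smooth, bounded above and below by positive constants. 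The Fredholm alternative on $\RR^{P_N}$ then yields a unique solution $u_1(\bd{X},\cdot)$ of $\mc{L}_{\jump}u_1(\bd{X},\cdot)=\wb{\mc{L}}u(\bd{X})-\mc{L}_\sigma u(\bd{X},\cdot)$ with $\sum_\sigma u_1(\bd{X},\sigma)\varrho(\sigma\mid\bd{X})=0$, because the required solvability condition $\sum_\sigma\varrho(\sigma\mid\bd{X})\bigl(\wb{\mc{L}}u-\mc{L}_\sigma u\bigr)=0$ is exactly the defining identity of $\wb{\mc{L}}$ (whose adjoint reproduces the $\wb{\mc{L}}^{\ast}$ of the excerpt). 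Since $h_{\sigma\sigma'}(\bd{X})$ and $\varrho(\sigma\mid\bd{X})$ depend smoothly on $\bd{X}$ and the chain has a spectral gap uniform in $\bd{X}$, $u_1$ is bounded and $C^2$ in $\bd{X}$, so $\mc{L}_\sigma u_1$ is bounded uniformly in $\sigma$.

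\textbf{Step 2 (tightness).} Writing $u(\bd{X}^\nu(t))=u(\bd{X}^\nu(0))+\int_0^t(\mc{L}_{\sigma^\nu(s)}u)(\bd{X}^\nu(s))\,\rd s+M_t^{u,\nu}$ for smooth $u$, the drift is bounded uniformly in $\nu$ and $\sigma$, and the martingale part carries only the diffusive piece of $\mc{L}_\sigma$ — the jumps act on $\sigma$ alone — so $\bd{X}^\nu$ has continuous paths with uniformly bounded drift and diffusion coefficients on the compact $\Omega^N$. Standard criteria (Aldous, or a Kolmogorov--Chentsov moment bound) then give tightness of the laws of $\bd{X}^\nu$ in $C([0,T];\Omega^N)$. \textbf{Step 3 (identification).} Set $u^\nu:=u+\nu^{-1}u_1$. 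The generator computation gives $\mc{L}_\nu u^\nu=\mc{L}_\sigma u+\mc{L}_{\jump}u_1+\nu^{-1}\mc{L}_\sigma u_1=\wb{\mc{L}}u+\nu^{-1}\mc{L}_\sigma u_1$, so
\[
M_t^\nu:=u^\nu(\bd{X}^\nu(t),\sigma^\nu(t))-u^\nu(\bd{X}^\nu(0),\sigma^\nu(0))-\int_0^t\Bigl(\wb{\mc{L}}u(\bd{X}^\nu(s))+\nu^{-1}\mc{L}_\sigma u_1(\bd{X}^\nu(s),\sigma^\nu(s))\Bigr)\rd s
\]
is a martingale. Along a subsequence with $\bd{X}^\nu\Rightarrow\bd{X}$, the $\nu^{-1}u_1$ boundary terms and the $\nu^{-1}$ integrand vanish uniformly by Step 1, so passing to the limit shows $u(\bd{X}(t))-u(\bd{X}(0))-\int_0^t\wb{\mc{L}}u(\bd{X}(s))\,\rd s$ is a martingale for the natural filtration of $\bd{X}$; i.e.\ $\bd{X}$ solves the martingale problem for $\wb{\mc{L}}$. \textbf{Step 4 (uniqueness and conclusion).} For the natural choice of $\mc{L}_\sigma$ in the excerpt, $\wb{\mc{L}}$ has the nondegenerate constant diffusion coefficient $2\beta^{-1}$ and a smooth drift $-\beta^{-1}\sum_\sigma\varrho(\sigma\mid\bd{X})\beta_{\sigma_i}\nabla V(\bd{x}_i)$ on the compact $\Omega^N$, with the same boundary conditions as \eqref{eq:overdamp}, so its martingale problem is well posed; hence the limit point is unique, the whole family $\bd{X}^\nu$ converges in law to it, and $\bd{X}(t)$ is precisely the averaged process with generator $\wb{\mc{L}}$.

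\textbf{Main obstacle.} The crux is Step 1: ensuring the fast chain $\mc{L}_{\jump}$ has a spectral gap uniform in $\bd{X}$ and that the corrector $u_1$ is smooth enough in $\bd{X}$ for $\mc{L}_\sigma u_1$ to be controlled — this is what makes the $\nu^{-1}$ remainder genuinely negligible. Here it comes essentially for free because on the compact domain the rates $h_{\sigma\sigma'}$ are pinched between positive constants and depend smoothly on $\bd{X}$; in a non-compact setting, or with a degenerate adjacency $a_{\sigma\sigma'}$, this step would require real work. The only other point needing care is the precise functional-analytic framework at $\partial\Omega$ (reflecting or periodic boundary), which should be inherited unchanged from the well-posedness of the original dynamics \eqref{eq:overdamp}.
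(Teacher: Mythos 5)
Your proposal is correct, but it takes the dual route to the one the paper actually uses. The paper argues on the forward (Fokker--Planck) side: it expands the density as $\varrho = \varrho_0 + \nu^{-1}\varrho_1 + \cdots$, uses $\mc{L}_{\jump}^{\ast}\varrho_0 = 0$ to conclude $\varrho_0(t,\bd{X},\sigma) = f_0(t,\bd{X})\,\varrho(\sigma\mid\bd{X})$, reads off the effective forward operator $\wb{\mc{L}}^{\ast}$ from the solvability condition at the next order, and then simply states that ``the above asymptotic derivation can be made rigorous as in usual averaging.'' You instead work on the backward side: for each frozen $\bd{X}$ you solve the corrector (Poisson) equation $\mc{L}_{\jump}u_1 = \wb{\mc{L}}u - \mc{L}_\sigma u$ on the finite set $P_N$ --- correctly observing that the solvability condition is exactly the definition of $\wb{\mc{L}}$ (the adjoint of which is the paper's $\wb{\mc{L}}^{\ast}$), and that irreducibility via transpositions together with the uniform positive bounds on $h_{\sigma\sigma'}$ over the compact $\Omega^N$ give a corrector with enough regularity in $\bd{X}$ --- then you obtain tightness of $\bd{X}^\nu$ and identify every limit point through the perturbed test function $u + \nu^{-1}u_1$ as a solution of the well-posed martingale problem for $\wb{\mc{L}}$. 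In effect you supply precisely the rigorization the paper gestures at: the paper's forward expansion is shorter and directly produces the effective equation and mixture potential used in the later sections, while your argument actually delivers the convergence-in-law statement of the proposition, at the cost of the extra bookkeeping (corrector regularity and uniform spectral gap, tightness, uniqueness of the limiting martingale problem, and the boundary-condition caveat you rightly flag).
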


\smallskip

Let us now make these formula explicit.  Since
\begin{equation}
  \mc{L}_{\sigma} = \sum_{k=1}^N \Bigl( - \beta^{-1} \beta_{\sigma_k} \nabla V(\bd{x}_k) \cdot \nabla_{\bd{x}_k} + \beta^{-1} \Delta_{\bd{x}_k} \Bigr),
\end{equation}
we have
\begin{equation}
  \begin{aligned}
    \wb{\mc{L}}^{\ast} f(\bd{X}) & =  \sum_{\sigma\in P_N}
    \mc{L}_{\sigma}^{\ast} 
    \bigl(\varrho(\sigma \mid \bd{X}) f(\bd{X}) \bigr) \\
    & = \sum_{\sigma\in P_N} \sum_{k=1}^N \biggl[\nabla_{\bd{x}_k} \cdot \biggl(
    \bigl(\nabla_{\bd{x}_k} \mc{V}(\bd{X}, \sigma)\bigr)
    \varrho(\sigma \mid \bd{X}) 
    f(\bd{X}) \biggr)  + \beta^{-1}
    \Delta_{\bd{x}_k}\bigl( \varrho(\sigma \mid \bd{X}) 
    f(\bd{X})\bigr) \biggr]
     \\
    & = \sum_{\sigma\in P_N} \sum_{k=1}^N \nabla_{\bd{x}_k} \cdot \biggl(
    \bigl(\nabla_{\bd{x}_k} \mc{V}(\bd{X}, \sigma)\bigr) 
    \varrho(\sigma \mid \bd{X}) f(\bd{X}) \biggr)  +  \beta^{-1} \sum_{k=1}^N
    \Delta_{\bd{x}_k}
    f(\bd{X}),
    \end{aligned}
\end{equation}
where the last equality uses
$\sum_{\sigma} \varrho(\sigma \mid \bd{X}) = 1$ for any $\bd{X}$.
Therefore, the corresponding stochastic differential equations are
given by
\begin{equation}\label{eq:infswap}
  \begin{aligned}
  \ud \bd{x}_j & = - \sum_{\sigma\in P_N} \varrho(\sigma \mid \bd{X})
  \nabla_{\bd{x}_j} 
  \mc{V}(\bd{X}, \sigma) \ud t + \sqrt{2 \beta^{-1}} \ud \bd{W}_t^{(j)}\\
  & = - R_j(\bd{X}) \nabla V(\bd{x}_j) \ud t + \sqrt{2 \beta^{-1}} \ud \bd{W}_t^{(j)}
  \end{aligned}
\end{equation}
with the scaling factor $R_j(\bd{X})$ given by 
\begin{equation}\label{eq:defRj}
  R_j(\bd{X}) = \beta^{-1} \sum_{\sigma\in P_N} \beta_{\sigma_j}
  \varrho(\sigma \mid \bd{X})
  =  \dfrac{\sum_{\sigma \in P_N}
    \beta_{\sigma_j} e^{-\beta \mc{V}(\bd{X}, \sigma)}}{\sum_{\sigma\in P_N} e^{-\beta \mc{V}(\bd{X}, \sigma)}}.
\end{equation}
The dynamics we get after averaging is similar to the original
overdamped dynamics under inverse temperature $\beta$, except that the
drift is scaled by the factor $R_j(\bd{X})$. Notice that in terms of
of the factor $\eta_{j,k}(\bd{X})$ defined in~\eqref{eq:etadef}, we
can write \eqref{eq:defRj} as
s\begin{equation}
  R_j(\bd{X}) = \beta^{-1} \sum_{k=1}^N \beta_k \eta_{j, k}(\bd{X}). 
\end{equation}%
Finally, note that \eqref{eq:infswap} can be written as
\begin{equation}
  \label{eq:infswap2}
    \ud \bd{x}_j = - \nabla_{\bd{x}_j} \mathcal{V}(\bd{X}) \ud t +
    \sqrt{2 \beta^{-1}} \ud \bd{W}_t^{(j)},
\end{equation}
where we defined the mixture potential
\begin{equation}
  \label{eq:26}
  \mathcal{V}(\bd{X}) = -\beta^{-1} \log \sum_{\sigma\in P_N}
  e^{-\beta \mathcal{V}(\bd{X},\sigma)} = -\beta^{-1} \log \sum_{\sigma\in P_N}
  e^{-\sum_{j=1}^N \beta_{\sigma(j)} V(\bd{x}_j)}
\end{equation}
As a result, the convergence properties of parallel tempering in the
infinite swapping limit (and the boost in efficiency this method
provides on vanilla sampling via~\eqref{eq:overdamp}) can be analyzed
via examination of the geometrical properties
of~$\mathcal{V}(\bd{X})$. This is what we will do in the next section
on an example.

\section{Efficiency analysis in  the harmonic case}
\label{sec:harmo}

From the discussion above, we know that, given any set of temperatures,
the optimal way to operate parallel tempering is in fact to simulate
the effective equation in~\eqref{eq:infswap} that emerges in the
$\nu\to\infty$ limit. This is clearly doable to the extent that we can
calculate the factors $R_j(\bd{X})$ defined in~\eqref{eq:defRj}. Since
these factors contained sums over $\sigma \in P_N$, i.e., over $N!$
terms, calculating $R_j(\bd{X})$ by exhaustive evaluation of these
sums is only an option if the number $N$ of temperatures remains
moderately low. In practice, however, one often needs to use
$N = 50 - 100$, for which these exhaustive calculations are not
doable. How to proceed in these situations will be explained in
Section~\ref{sec:hmm}. The question we address here is why there is a
practical need to use a rather larger number of temperatures rather
than, say, $N=2$. This will also allow us to investigate how to choose the
temperature ladder in the method.

Insight about this issue can be gained by looking at the special case
when the potential is harmonic
\begin{equation}
  \label{eq:harmonic}
  V(\bd{x}) =  \tfrac{1}{2} \abs{\bd{x}}^2, \qquad \bd{x} \in \RR^D,
\end{equation}
in which case the canonical density reduces to
\begin{equation}
  \label{eq:11}
  \rho_\beta(\bd{x}) = \bigl(2\pi\beta^{-1}\bigr)^{-\frac{D}2} e^{-\frac12 \beta |\bd{x}|^2}.
\end{equation}
Even though this example is very simple, it has the advantage to be
amenable to analysis and it illustrates difficulties that are shared
with more realistic (and complex) systems.  We can always set
$\beta=1$ by rescaling $\bd{x}\to\beta^{-1} \bd{x}$, so let us focus
on this case here. As we have seen in Proposition~\ref{prop:variance},
the variance of the estimator for parallel tempering can be controlled
by the variance of the observable, and thus we shall just focus on the
convergence of the dynamics to the invariant measure.


\subsection{The case with two temperatures}
\label{sec:2temp}

Let us first consider the situation where we only add one artificial
temperature, so that we have two replicas with
$\beta_1=\beta = 1$ and $\beta_2=\bar{\beta} <1$. In this case, the
effective equation we obtain in the $\nu\to\infty$ limit reads
\begin{equation}
  \begin{aligned}
    & \ud \bd{x}_1 = - r_1(V(\bd{x}_1), V(\bd{x}_2)) \nabla
    V(\bd{x}_1) \ud t + \sqrt{2} \ud W_1
    =- r_1\left(\tfrac12|\bd{x}_1|^2, \tfrac12|\bd{x}_2|^2\right)
    \bd{x}_1\ud t + \sqrt{2} \ud W_1 \\
    & \ud \bd{x}_2 = - r_2(V(\bd{x}_1), V(\bd{x}_2)) \nabla V(\bd{x}_2) \ud
    t + \sqrt{2} \ud W_2 = - r_2\left(\tfrac12|\bd{x}_1|^2,
      \tfrac12|\bd{x}_2|^2\right) \bd{x}_2 \ud t + \sqrt{2} \ud W_2
\end{aligned}\label{eq:3}
\end{equation}
where 
\begin{equation}
  r_1(E_1, E_2) = \dfrac{e^{-E_1 - \bar{\beta} E_2} + \bar{\beta}
    e^{-E_2 - \bar{\beta} E_1}}{e^{-E_1 - \bar{\beta} E_2} + e^{-E_2 -
      \bar{\beta} E_1}}, \qquad r_2(E_1, E_2) = r_1(E_2,E_1)
\end{equation}
Therefore, we can write down a closed evolution equation for the law
of $\mathcal{E}_1 = \frac1{2D}|\bd{x}_1|^2$ and
$\mathcal{E}_2 = \frac1{2D}|\bd{x}_2|^2$. A few simple manipulations
show that this equation can be written as
\begin{equation}
  \label{eq:1}
  \ud\begin{pmatrix}
    \mathcal{E}_1 \\
    \mathcal{E}_2
  \end{pmatrix} = \begin{pmatrix}
    2\mathcal{E}_1 & 0 \\
    0& 2\mathcal{E}_2
  \end{pmatrix} \begin{pmatrix}
    \partial_{\mathcal{E}_1} \log g(\mathcal{E}_1,\mathcal{E}_2)\\
    \partial_{\mathcal{E}_2}\log g(\mathcal{E}_1,\mathcal{E}_2)
    \end{pmatrix}
    \ud t + D^{-1}\begin{pmatrix}
    2\\
    2
  \end{pmatrix}  \ud t + \sqrt{2D^{-1}} \begin{pmatrix}
    \sqrt{2\mathcal{E}_1 }& 0 \\
    0& \sqrt{2\mathcal{E}_2}
  \end{pmatrix} \begin{pmatrix}
    \ud W_1\\
    \ud W_2
    \end{pmatrix},
\end{equation}
where 
\begin{equation}
  \label{eq:5}
  g(\mathcal{E}_1,\mathcal{E}_2) =
  (\mathcal{E}_1\mathcal{E}_2)^{\frac12-\frac{1}{D}}\left(
  e^{-D(\mathcal{E}_1+ \bar{\beta}
    \mathcal{E}_2)}+e^{-D(\mathcal{E}_2+
    \bar{\beta} \mathcal{E}_1)}\right)^{1/D}.
\end{equation}
Writing compactly
$\mathcal{E}= (\mathcal{E}_1, \mathcal{E}_2)^{\top}$, this equation is of
the from
\begin{equation}
   \label{eq:2}
   \ud\mathcal{E} = M( \mathcal{E} ) \nabla_{\mathcal{E}} \log
   g(\mathcal{E}) \ud t + D^{-1}\divop\, M( \mathcal{E} ) \ud t 
   +    \sqrt{2D^{-1}} M^{1/2} ( \mathcal{E} )\ud W
\end{equation}
with $M( \mathcal{E} )  = \text{diag}(2 \mathcal{E}_1,2 \mathcal{E}_2
)$, which indicates that its invariant density is proportional to
$g^D$, i.e. it is given by 
\begin{equation}
  \label{eq:6}
  \varrho(\mathcal{E}_1,\mathcal{E}_2) = \mathcal{C}_D^{-1}
  (\mathcal{E}_1\mathcal{E}_2)^{\frac{D}2-1}\left(
  e^{-D(\mathcal{E}_1+\bar{\beta} \mathcal{E}_2)}+e^{-D(\mathcal{E}_2+ \bar{\beta} \mathcal{E}_1)}\right).
\end{equation}
where the normalization constant is given by 
\begin{equation*}
  \begin{aligned}
    \mathcal{C}_D & = \int
    (\mathcal{E}_1\mathcal{E}_2)^{\frac{D}2-1}\left(
      e^{-D(\mathcal{E}_1+\bar{\beta} \mathcal{E}_2)}+e^{-D(\mathcal{E}_2+ \bar{\beta} \mathcal{E}_1)}\right) \ud \mathcal{E}_1 \ud \mathcal{E}_2 \\
    & = 2 D^{-D/2} (\bar\beta D)^{-D/2} \Gamma \Bigl(\frac{D}{2}\Bigr) ^2  \\
    & \sim \frac{2}{D} (2e)^{-D} \bar\beta^{-D/2}
  \end{aligned}
\end{equation*}
as $D\to\infty$.  Thus, \eqref{eq:1} and equivalently \eqref{eq:2}
describe diffusion on the energy landscape
\begin{equation}
  \label{eq:7}
  -\log\, g(\mathcal{E}_1,\mathcal{E}_2).
\end{equation}
As long as $\bar \beta<1$, this landscape possesses two minima with a
saddle point in between (see Fig.~\ref{fig:remdex}).
\begin{figure}[ht]
  \centering
  \includegraphics[width = 4in]{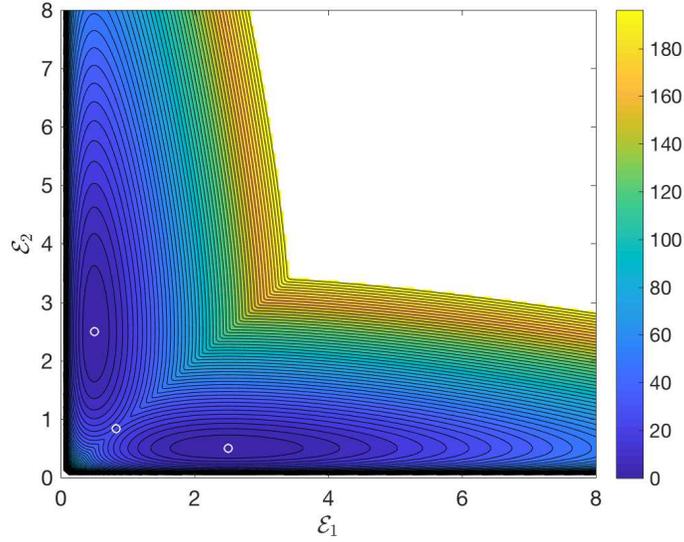}
  \caption{Energy landscape of $\mc{E}_1$ and $\mc{E}_2$ for
    $D = 100$ and $\bar{\beta} = 0.2$. The white circles show the
    location of the two minima and the saddle points, located
    approximately at~\eqref{eq:8} and~\eqref{eq:9}, respectively. The
    potential is also capped at 200 for clarity.}
  \label{fig:remdex}
\end{figure}
For large $N$, the minima
are approximately (that is, to leading order in $N^{-1}$) located at
\begin{equation}
  \label{eq:8}
  (\mathcal{E}_1,\mathcal{E}_2) = \frac12(1, \bar \beta^{-1}), \quad \text{and}
  \quad (\mathcal{E}_1,\mathcal{E}_2) = \frac12(\bar \beta^{-1},1)
\end{equation}
with energy approximately given by $E_m = 1 +\log 2 + \frac{1}{2} \log
\bar \beta$ and the saddle point is approximately located at
\begin{equation}
  \label{eq:9}
  (\mathcal{E}_1,\mathcal{E}_2 ) = \bigl((1+\bar \beta)^{-1}, (1+\bar \beta)^{-1}\bigr)
\end{equation}
with energy approximately given by $E_s = 1 +\log (1+\bar \beta)$.
Therefore, for large $D$, we can use results from large deviation
theory to deduce that the smallest nonzero eigenvalue of the generator
of~\eqref{eq:1} (that is, roughly, the inverse of mean first passage
time between the minima) is log-asymptotically given by
\begin{equation}
  \label{eq:10}
  \bar{\lambda}_1 \asymp \exp(- D \Delta E), \qquad 
  \Delta E = E_s-E_m = \log (1+\bar \beta) -\log 2 +
  \frac12 \log \bar\beta^{-1} \qquad \text{as} \quad D\to\infty.
\end{equation}
Therefore, for any $\bar \beta<1$ fixed, the system becomes
increasingly metastable as $D$ gets larger. In particular, if we start
with a configuration with $\mc{E}_1 < \mc{E}_2$, it will take
exponentially (in $D$) long time for the dynamics to reach the
configurations with $\mc{E}_1 > \mc{E}_2$. This means exponentially
slow convergence to equilibrium, and shows that using two temperature
is not sufficient in general.

\subsection{The case with many temperatures}
\label{sec:Ntemp}

Consider now what happens with multiple temperatures, in which case a
similar analysis as in the previous section can be done. In
particular, the mixture potential (for the energies) has local minima
at
\begin{equation}
\mathcal{E}_i = \frac{1}{2\beta_{\sigma_i}} \qquad \text{for any permutation }\sigma,
\end{equation}
whose energies scale as
\begin{equation}
  \frac12 \sum_i \log \beta_i + \frac{N}{2} (\log 2 +1).
\end{equation}
The saddle points are located at 
\begin{equation}
\begin{cases}
 \dps\mathcal{E}_i = \frac{1}{2\beta_{\sigma_i}} \qquad \text{for}\; i\not=j,k \\
 \mathcal{E}_j = \mathcal{E}_k = (\beta_{\sigma_j}+\beta_{\sigma_k})^{-1}
\end{cases} 
\end{equation}
for any permutation $\sigma$ and any pair $j\not=k$, whose 
energies scale as 
\begin{equation}
  \frac12 \sum_{i\not=j,k} \log \beta_j + \log (\beta_j+\beta_k) + \frac{N-2}{2} \log 2 + \frac{N}2.
\end{equation}
Therefore the barrier scale as 
\begin{equation}
  \Delta E_{j,k} = \log \biggl( \frac{\beta_j+\beta_k}{(\beta_j \beta_k)^{1/2}} \biggr) - \log 2.
\end{equation}
Assuming that $\beta_1 >\beta_2 > \ldots > \beta_N$, this indicates
that we are more likely to see transition between minima lying on
adjacent $\beta_j$s, and in order to keep all the rates equal (if we
use Arrhenius formula such that rate is given by exponential of the
energy barrier), we must have
\begin{equation}
  \frac{\beta_j+\beta_{j+1}}{(\beta_j \beta_{j+1})^{1/2}} = \text{cst}
\end{equation}
for all $j=1, \ldots, N$. This agrees with the analysis based on
transition state theory performed in our previous work \cite{Lu2013},
and shows that the optimal choice of temperature on the ladder should
satisfy
\begin{equation}
\frac{\beta_{j+1}}{\beta_j} = \Bigl(\frac{\beta_N}{\beta_1}\Bigr)^{1/(N-1)}
\end{equation}
and that this ratio should be of order $1$.  This indicates that using
multiple temperature removes the energy barrier and hence makes
convergence much faster.

\section{Simulation of the stochastic switching dynamics}
\label{sec:hmm}

\subsection{Direct simulation based on stochastic simulation
  algorithm}
\label{sec:ssa}

Here we present an exact implementation of the stochastic switching
dynamics, assuming that we can integrate the Markov process
corresponds to $\mc{L}_\sigma$ exactly. In practice, this
integration often needs to be discretized, which will introduce some additional
errors. We will discuss these later.

Given that at time $t$ the system is in the state $(\bd{X}(t), \sigma(t))$,
the algorithm updates the state via the following steps:
\begin{enumerate}[1.]
\item Draw a random number $r \sim U([0, 1])$;

\item Integrate $\bd{X}$ with the fixed $\sigma := \sigma(t)$ to time
  $t+\tau$, such that 
  \begin{equation*}
    r = \exp\biggr( - \nu \sum_{ \sigma'\neq \sigma } \int_t^{t + \tau} 
     h_{\sigma\sigma'}(\bd{X}(s))\, \ud s 
    \biggr), 
  \end{equation*}
  where $r$ is the random number drawn in Step 1;
\item Choose $\sigma(t + \tau) $ with probability 
  \begin{equation*}
    \PP\bigl( \sigma(t+\tau) = \sigma' \mid \sigma, \tau \bigr)  
    = \frac{h_{\sigma\sigma'}(\bd{X}(t+\tau))}
    {\dps \sum_{\wt{\sigma} \neq \sigma} h_{\sigma\wt{\sigma}}(\bd{X}(t+\tau))};
  \end{equation*}
\item Update the time to $t + \tau$ and repeat till the target time.
\end{enumerate}

If a large swapping rate $\nu$ is employed, the time lags $\tau$ will
likely be rather small as their expectation is proportional to
$\nu^{-1}$. In these situations, the stochastic switching dynamics
exhibits two scales: the time scale of the diffusion process and that
of the jumping process which is much faster. As a result, direct
simulation by the algorithm discussed above is no longer
efficient. This difficulty is in fact very similar to the one we
encounter when numerically integrating SDEs with multiple time scales,
for which a by-now standard idea is to explore the asymptotic limit
explicitly to design a more efficient multiscale integrator. In what
follows, we explore such an integrator, which was proposed in
\cite{TangLuAbramsVE} and fits the framework of the heterogeneous
multiscale methods (HMM)~\cites{EEngquist:03,EVE:03,Weinan2007}.

\subsection{A multiscale integrator} 
\label{sec:multi}

The basic idea of the HMM integrator is to compute on-the-fly the
scaling factor $R_j(\bd{X})$ in the limiting
equation~\eqref{eq:infswap} for the slow variables~$\bd{X}$ using
short bursts of simulation of the fast process $\sigma$ -- these
simulations are performed in what is called the microsolver in HMM,
whereas the routine used to evolve the slow variables is referred to
as the macrosolver. The data generated by the microsolver is passed to
the unknown coefficients for the macrosolver via an estimator.  In the
present context, the HMM scheme we will use works as follows:

Choose a time step $\Delta t$ appropriate for the simulation of the
limiting SDEs~\eqref{eq:infswap} and pick a frequency $\nu$ such that
$\nu \gg 1/\Delta t$ (for example
$\nu = 10^2/\Delta t - 10^3/\Delta t$).  Start the simulation with
$\bd{X}^{(0)}= \bd{X}(0)$ and $\sigma^{(0)}=\sigma(0)$, and set $t_0=0$. Then
for $k\ge0$, we use the following 3-step iteration algorithm to
update:\begin{enumerate}[1.]
\item \textit{Microsolver:} Evolve $\sigma^{(k)}$ via SSA from $t_k$
  to $t_{k+1} := t_k+\Delta t$ using the rate in~\eqref{eq:jumpProb}
  with $\bd{X} = \bd{X}^{(k)}$ fixed, that is: Set
  $\sigma^{(k,0)} = \sigma^{(k)}$, $t_{k,0}=t_k$, and for $l\ge1$, do:
  \begin{enumerate}
  \item Compute the lag to the next event via
    \begin{equation*}
      \tau_l = - \frac{\ln r}{\nu h_{\sigma^{(k,l-1)}}(\bd{X}^{(k)})}
    \end{equation*}
    where $r$ is a random number uniformly distributed in the interval
    $(0,1)$ and
    $h_{\sigma}(\bd{X}^{(k)}) = \sum_{\sigma'\not=\sigma}
    h_{\sigma,\sigma'}(\bd{X}^{(k)})$;
  \item Pick $\sigma_{k,l}$ with probability 
    \begin{equation*}
      \PP_{\sigma^{(k,l)}} =  \frac{h_{\sigma^{(k,l-1)},\sigma^{(k,l)}}(\bd{X}^{(k)})}{h_{\sigma^{(k,l-1)}}(\bd{X}^{(k)})};
    \end{equation*}
  \item Set $t_{k,l} = t_{k,l-1}+ \tau_l$ and repeat till the first
    $L$ such that $t_{k,L}>t_k+\Delta t$; then set $\sigma^{(k+1)}=
    \sigma^{(k,L)}$ and reset $\tau_L = t_k+\Delta t - t_{k,L-1}$.
  \end{enumerate}

\item \textit{Estimator:}  Given the trajectory of $\sigma$, 
  estimate $\eta_{j, i}(\bd{X}^{(k)})$ and $R_j(\bd{X}^{(k)})$ via
  \begin{equation*}
    \begin{aligned}
      \wh{\eta}_{j,i}(\bd{X}^{(k)}) & = \frac{1}{\Delta t}
      \int_{t_k}^{t_k+\Delta t} \mathbb{1}_{i=\sigma_j(s)} \ud s\\
      & = \frac{1}{\Delta t} \sum_{l=1}^L \tau_l
      \mathbb{1}_{i=\sigma^{(k,l)}_j} ,
    \end{aligned}
  \end{equation*}
  and
  \begin{equation*}
    \begin{aligned}
      \wh{R}_j(\bd{X}^{(k)}) & = 
      \beta^{-1} \sum_i \frac{\beta_i}{\Delta t} \int_{t_k}^{t_k+\Delta t}
      \mathbb{1}_{i=\sigma_j(s)} \ud s\\
      &= \beta^{-1} \sum_i \beta_i \,\hat{\eta}_{j,i}(\bd{X}^{(k)}).
    \end{aligned}
  \end{equation*}
\item \textit{Macrosolver:} Evolve $\bd{X}^{(k)}$ to $\bd{X}^{(k+1)}$ using one
  time-step of size $\Delta t$ in the numerical integrator
  for~\eqref{eq:infswap} with $R_j(\bd{X}^{(k)})$ replaced by the factor
  $\widehat{R}_j(\bd{X}^{(k)})$ calculated in the estimator. Then repeat the
  three steps above for each time-step $k$.
\end{enumerate}

\smallskip 

\begin{remark}
  It is useful to compare the above scheme with a time-splitting
  integrator for the dynamics of $(\bd{X}(t), \sigma(t))$, that is for
  time step size $\Delta t$, we alternate between evolving $\bd{X}(t)$
  and $\sigma(t)$ while freezing the other component. More precisely,
  from $t_k$ to $t_{k+1} = t_k + \Delta t$ the algorithm updates the
  state via
\begin{enumerate}[1.]
\item Evolve $\sigma^{(k)}$ via SSA from $t_k$ to
  $t_{k+1} := t_k+\Delta t$ using the rate in~\eqref{eq:jumpProb} with
  $\bd{X} = \bd{X}^{(k)}$ to get $\sigma^{(k+1)}$, using the similar
  procedure as the \emph{Microsolver} in the multiscale integrator;
\item Evolve $\bd{X}^{(k)}$ to $\bd{X}^{(k+1)}$ using
  $\mc{L}_{\sigma^{(k+1)}}$, namely, if Euler-Maruyama scheme is used
  \begin{equation}
    \bd{x}_j^{(k+1)} = \bd{x}_j^{(k)} - \Delta t \beta^{-1}
    \beta_{\sigma_j^{(k+1)}} 
    \nabla V(\bd{x}_j^{(k)}) + \sqrt{2 \beta^{-1} \Delta t}\; \eta_j^{(k)}, 
  \end{equation}
  where $\eta_j^{(k)}$ are i.i.d.~standard Gaussian random variables.
\end{enumerate}
The difference with the multiscale integrator lies in the evolution of
$\bd{X}$, where the splitting scheme uses the current instance
$\sigma^{(k+1)}$ while the multiscale integrator incorporates into the
average of the coefficients. Since we are interested in the regime
that the dynamics of $\sigma$ is much faster of $\bd{X}$, using the
average is preferred as it is
consistent with the $\nu \to \infty$ limit.
\end{remark}

We will not go into details of the numerical analysis of the
multiscale integrator here. Numerical analysis of HMM integrators in
similar spirit can be found in \cite{EEngquist:03, EVE:03, EVE:07,
  ELiuVE:05, TaoOwhadiMarsden2010, LuSpiliopoulos}, the basic idea is
to show that the scheme is consistent with the averaging limit as
$\nu \to \infty$ by showing that the estimator gives a good
approximation to $R_j(\bd{X})$.

\section{Extension to simulated tempering}\label{sec:simulated}

The analysis we have performed in the context of parallel tempering
can be generalized to simulated tempering, as we briefly explain in this
section. For a performance comparison between parallel and simulated
tempering, we refer the readers to \cite{Zhang2008}.

We begin by noting that simulated tempering, in which we use one
single replica but make the temperatures dynamically switch between
$N$ values $\beta_1$, $\beta_2$, \ldots, $\beta_N$, can be cast in a
framework similar to that of the stochastic switching process
discussed in Section~\ref{sec:remd}. This amounts to assuming that the
state space of the process is $\Omega^N \times \{1, \ldots, N\}$, and
requiring that this process samples the joint equilibrium measure
\begin{equation}
  \label{eq:imst}
  \pi(\ud \bd{x},i) = \varrho(\bd{x}, i) \ud \bd{x}, \qquad \text{with} \quad
  \varrho(\bd{x}, i)
  = \frac{n_i e^{-\beta_i V(\bd{x})}}{\sum_j n_j Z_{\beta_j}}. 
\end{equation}
where $\{n_i\}_{i=1,\ldots, N}$ is a set of positive weights to be
chosen \textit{a~priori}.  To this end, the dynamics in $x$ given $i$
can be chosen as a time rescaled overdamped equation:
\begin{equation}
  \ud \bd{x} (t) = - \beta^{-1} \beta_i \nabla V(\bd{x} (t)) + 
  \sqrt{2 \beta^{-1}} \ud \bd{W}_t, 
\end{equation}
so that it samples 
\begin{equation}
  \varrho(\bd{x} \mid i) = \frac{\varrho(\bd{x}, i)}{\int_{\Omega} \varrho(\bd{y}, i)
    \ud \bd{y}} = Z_{\beta_i}^{-1} e^{-\beta_i V(\bd{x})}. 
\end{equation}
The jumping rate matrix between the temperature indices $i$ given $x$
can be chosen e.g.
\begin{equation}
  \nu k_{ij}(\bd{x}) =  \nu \min\Bigl( \frac{n_{j}}{n_{i}} e^{-(\beta_j -
    \beta_i) V(\bd{x})}, 1 \Bigr), \qquad i\not = j 
\end{equation}
with $k_{ii}(\bd{x}) = -\sum_{j\not=i}k_{ij}$ so that the jumping process
samples
\begin{equation}
  \varrho(i \mid \bd{x}) = \frac{\varrho(\bd{x}, i)}{\sum_{j=1}^N
    \varrho(\bd{x}, j)} 
  = \frac{n_i e^{-\beta_i V(\bd{x})}}{\sum_j n_j e^{-\beta_j V(\bd{x})}}. 
\end{equation}
Here $\nu$ is again an overall scaling parameter for the switching
frequency.
Other choices such as
\begin{equation}
  k_{ij}(\bd{x}) =  \frac{n_j}{n_i e^{-(\beta_i - \beta_j)V(\bd{x})} + n_j}, \qquad i\not = j,
\end{equation}
or 
\begin{equation}
  k_{ij}(\bd{x}) =  \sqrt{\frac{n_j}{n_i}} e^{-\frac{1}{2} (\beta_j -
    \beta_i) V(\bd{x})}, \qquad i\not = j,
\end{equation}
are possible as well.
The infinitesimal generator of the stochastic switching 
process $(\bd{x} (t),i(t))$ defined this way is given by
\begin{equation}
  \label{eq:genst}
  (\mc{L}_{\nu} u)(\bd{x}, i) = - \beta^{-1} \beta_i \nabla V(\bd{x}) \cdot 
  \nabla u(\bd{x},\beta) 
  + \beta^{-1} \Delta u(\bd{x}, \beta) 
  + \nu \sum_{j\not=i} k_{i,j} (\bd{x})  \bigl( u(\bd{x}, j) - 
  u(\bd{x}, i) \bigr), 
\end{equation}
and it is an easy exercise to show that its invariant measure is
indeed~\eqref{eq:imst}. 

In terms of calculating expectations, in the context of simulated
tempering we can use
\begin{equation}
  \label{eq:1eavg}
  \begin{aligned}
    \average{A}_{\beta_k} &\equiv \int_{\Omega} A(\bd{x})
    \rho_{\beta_k}(\bd{x}) \ud \bd{x}\\
    & = \int_{\Omega} A(\bd{x}) g_k(\bd{x}) \varrho(x) \ud \bd{x} 
  \end{aligned}
\end{equation}
where $\varrho(\bd{x})$ is the marginal of $\varrho(\bd{x},i)$ on $\bd{x}$:
\begin{equation}
  \label{eq:4}
  \varrho(\bd{x}) = \sum_{i=1}^N \varrho(\bd{x},i) = \frac{\sum_{i=1}^N n_i
    e^{-\beta_iV(\bd{x})}}
  {\sum_{i=1}^N n_i Z_{\beta_i}},
\end{equation}
and we defined
\begin{equation}
  \label{eq:gk}
  g_k (\bd{x}) = \frac{\rho_{\beta_k}(\bd{x})}{\varrho(\bd{x})} = 
    \frac{Z_{\beta_k}^{-1}e^{-\beta_k V(\bd{x}) } 
      \sum_{j=1}^N n_j Z_{\beta_j}}{\sum_{j=1}^N n_j e^{-\beta_j V(\bd{x})}}.
\end{equation}
These expressions show the main (and well-known) difficulty one is
faced when using simulated tempering: even though it looks simpler
than simulated tempering because it involves a single replica, it
requires one to learn the partition functions $Z_{\beta_i}$ to
calculate expectations (in contrast, parallel tempering does not
require this).

Setting this difficulty inside, an argument similar to the one
presented in Section~\ref{sec:LDP} indicates that the large deviation
rate functional for the empirical measure of the process with
generator~\eqref{eq:genst} is the sum of the terms, the first of which
is independent of $\nu$ and the second proportional to $\nu$. This
indicates that, in the context of simulating tempering too, it is
optimal to take the limit as $\nu\to\infty$ (which we will refer to as
the `infinite switching limit'). In this limit, it is easy to show
using averaging arguments similar to those presented in
Section~\ref{sec:infinity}, that the process $\bd{x} (t)$ converges to the
solution of the following effective equation
\begin{equation}
  \label{eq:19}
  \ud \bd{x} (t) = - \beta^{-1} \wb{\beta}(\bd{x} (t)) \nabla V(\bd{x}
  (t)) 
  + \sqrt{2 \beta^{-1}} \ud \bd{W}_t, 
\end{equation}
where we defined
\begin{equation}
  \label{eq:20}
  \wb{\beta}(\bd{x}) = \sum_{i=1}^N \beta_i \varrho(i\mid \bd{x}) = 
  \frac{\sum_{i=1}^N \beta_i n_i e^{-\beta_i V(\bd{x})}}{\sum_{i=1}^N n_i e^{-\beta_i V(\bd{x})}}. 
\end{equation}
This equation is in principle much easier to simulate than the
corresponding effective equation~\eqref{eq:infswap} obtained in the
infinite switching limit of paralel tempering, since the sum
in~\eqref{eq:20} involves $N$ terms instead of $N!$ (and hence can be
evaluated straightforwardly even for large $N$). However, the
efficiency boost of the method depends crucially on the choice of the
weights~$n_i$. It has been argued that the optimal choice is to take
$n_i = Z_{\beta_i}^{-1}$, but, unfortunately, these partition functions
are unknown \textit{a~priori} (which brings us back to the issue with
the estimator~\eqref{eq:1eavg} for the expectations). How to go around
these difficulties (and justify the optimality of the choice
$n_i = Z_{\beta_i}^{-1}$ starting from the effective equation
in~\eqref{eq:19}) will be discussed elsewhere.

\section{Concluding remarks}
\label{sec:conclu}

We have shown how a formulation of parallel tempering as a stochastic
switching process for the coupled dynamics of replica configuration
and temperature permutation naturally leads to considering these
equations in the infinite swapping limit. Indeed this can be justified
from the additive structure of the rate functional for the empirical
measure of the process, using tools from large deviation theory as
originally proposed in~\cite{Plattner2011} and further justified
in~\cite{DupuisLiu:2015}. This observation suggests to simulate
directly the effective equation that emerges in this limit
. Unfortunately, this is no trivial task because the direct
calculation of the coefficients in this equation involves sums over
all possible permutations of the temperatures. Clearly this is only
doable if the number of temperatures/replicas is small, which is a
set-up in which parallel tempering leads to no significant boost in
efficiency. This observation is common knowledge for practitioners of
parallel tempering, and is usually justified by showing that many
temperatures are required to get a significant acceptance rate of the
temperature swaps. Within the infinite swapping limit, we arrive at
the same conclusion, but from a different viewpoint: as we showed
here, it is necessary to use many temperatures in order to eliminate
the free energy barriers on the mixture potential.

Once one realizes that the effective equation emerging in the infinite
swapping limit must be operated with many temperatures, the question
arises as how to do so in practice. Following the suggestion made
in~\cite{TangLuAbramsVE}, we showed here that it can be done using
multiscale integrators such as HMM. These schemes are straightforward
generalization of those traditionally used in parallel tempering, and
we therefore they should be useful to improve the efficiency of this
method with rather minimal modifications of the existing
codes. Alternatively, we showed that one can use similar ideas in the
context of simulated tempering, in which case the limiting equation
that arise in the infinite switching limit is simpler.

\section*{Acknowledgments}

We thank C. Abrams, B. Leimkuhler, and A. Martinsson for interesting
discussions. The work of JL is supported in part by the National
Science Foundation (NSF) under grant DMS-1454939. The work of EVE is
supported in part by the Materials Research Science and Engineering
Center (MRSEC) program of the NSF under award number DMR-1420073 and
by NSF under award number DMS-1522767.

\bibliographystyle{amsxport}
\bibliography{ssa}

\end{document}